\newtheorem{theorem}{Theorem}
\newtheorem{corollary}{Corollary}
\DeclareMathOperator{\Var}{Var}
\DeclareMathOperator{\sgn}{sgn}
\title{\textbf{Smoothed Quantile Estimation via Interpolation to the Mean}}
\author[1]{Sa\"{i}d Maanan\,\orcidlink{0000-0001-9365-7586}
\thanks{Corresponding author: \texttt{maanan.said@gmail.com}}}
\author[2]{Azzouz Dermoune}
\author[1]{Ahmed El Ghini}
\affil[1]{LEAM, Mohammed V University in Rabat,  Morocco}
\affil[2]{Université de Lille, Laboratoire Paul Painlevé, Lille, France}
\date{}
\begin{document}
\maketitle
\begin{abstract}
This paper introduces a unified family of smoothed quantile estimators that continuously interpolate between classical empirical quantiles and the sample mean. The estimators~$\hat q(z,h)$ are defined as minimizers of a regularized objective function depending on two parameters: a smoothing parameter~$h \ge 0$ and a location parameter~$z \in \mathbb{R}$. When~$h=0$ and $z \in (-1,1)$, the estimator reduces to the empirical quantile of order~$\tau = (1-z)/2$; as~$h \to \infty$, it converges to the sample mean for any fixed~$z$. We establish consistency, asymptotic normality, and an explicit variance expression characterizing the efficiency–robustness trade-off induced by~$h$.

A key geometric insight shows that for each fixed quantile level~$\tau$, the admissible parameter pairs~$(z,h)$ lie on a straight line in the parameter space, along which the population quantile remains constant while asymptotic efficiency varies. The analysis reveals two regimes: under light-tailed distributions (e.g., Gaussian), smoothing yields a monotonic but asymptotic variance reduction with no finite optimum; under heavy-tailed distributions (e.g., Laplace), a finite smoothing level~$h^*(\tau)>0$ achieves strict efficiency improvement over the classical empirical quantile. Numerical illustrations confirm these theoretical predictions and highlight how smoothing balances robustness and efficiency across quantile levels.

\noindent\textbf{Keywords:} quantile estimation, smoothing, asymptotic efficiency, robustness, interpolation, central limit theorem
\end{abstract}

\section{Introduction}
	
The classical trade-off between efficiency and robustness has long been a central issue in statistics since the pioneering work of \cite{Huber1981} and \cite{Hampel2005}. Two canonical estimators for the center of a distribution illustrate this contrast: the sample mean, which minimizes squared error loss and achieves optimal efficiency under Gaussian assumptions, and quantile-based estimators, which offer robustness to heavy tails and outliers, often at the cost of efficiency.
	
	Let \( Y \) be a real-valued random variable with mean \( m = \mathbb{E}[Y] \), finite variance \(\Var(Y) < \infty\), cumulative distribution function \( F\) and we also assume 
	the existence of continuous probability density \(f\). 
	Given independent and identically distributed observations \( Y_1, \dots, Y_n \), consider estimating \( m \) using either the sample mean \(\bar Y_n = \frac{1}{n}\sum_{i=1}^n Y_i\), or the empirical quantile \(\hat m_n\) of order \( \tau = F(m) = \mathbb{P}(Y \leq m) \).
	
	Under standard regularity conditions \cite{Vaart1998}, their asymptotic distributions are
	\[
	\sqrt{n}(\bar Y_n - m) \;\overset{d}{\longrightarrow}\; \mathcal{N}(0, \Var(Y)), \qquad
	\sqrt{n}(\hat m_n - m) \;\overset{d}{\longrightarrow}\; \mathcal{N}\!\left(0, \frac{\tau(1-\tau)}{f(m)^2}\right),
	\]
	where \( f\) denotes the density of \( Y \). Hence, the quantile estimator is asymptotically more efficient than the mean when
	\begin{equation}
		\frac{\tau(1-\tau)}{f(m)^2} < \Var(Y).
		\label{eq:efficiency_condition}
	\end{equation}
	Intuitively, this inequality holds when the density around the mean is relatively flat, that is, when \( Y \) exhibits heavy tails or strong asymmetry.
	
	For Gaussian data, this condition fails. If \(Y \sim \mathcal{N}(m, \sigma^2)\), then \( \tau = \Phi(0) = 1/2 \) and \( f(m) = 1/(\sigma\sqrt{2\pi}) \), yielding
	\[
	\frac{0.25}{(1/(\sigma\sqrt{2\pi}))^2} = \frac{\pi \sigma^2}{2} \approx 1.57\sigma^2 > \sigma^2,
	\]
	so the sample mean dominates the median in efficiency. However, for asymmetric or heavy-tailed distributions, condition \eqref{eq:efficiency_condition} can easily hold, making quantile-based estimation preferable \cite{Koenker1978, Koenker2009}.
	For instance, in the asymmetric Laplace model 
	$\mathrm{AL}(\mu, b, \kappa)$ with asymmetry parameter $\kappa>0$, 
	the density is
	\[
	f(y)
	= \frac{\kappa}{b(1+\kappa^2)}
	\begin{cases}
		\exp\!\left(-\dfrac{\kappa}{b}(\mu - y)\right), & y < \mu, \\[2ex]
		\exp\!\left(-\dfrac{1}{b\kappa}(y - \mu)\right), & y \ge \mu.
	\end{cases}
	\]
	Its mean and corresponding quantile level satisfy
	$m = \mu + b(\frac{1}{\kappa}-\kappa)$ and $\tau = F(m) = \frac{1}{1+\kappa^2}$,
	and substituting these into condition \eqref{eq:efficiency_condition} gives
	\[
	\frac{\tau(1-\tau)}{f(m)^2} = b^2 < \Var(Y) = b^2 \frac{1+\kappa^4}{\kappa^2},
	\]
	which shows that quantile estimation can outperform the mean in terms of asymptotic efficiency.
	
	Thus, a natural question arises:
	Can one design an estimator that adapts continuously between quantiles and the mean, achieving both robustness and efficiency depending on the underlying distribution?
	
	\section{A Family of Smoothed Quantile Estimators}
	\label{sec:family}
	
	Motivated by this question and building on the Gibbs measure framework introduced by \cite{Dermoune2017}, we propose a unified family of estimators \( \hat q(z,h) \) that smoothly interpolates between empirical quantiles and the sample mean. This construction comes from 
	the regularization-based perspective of \cite{Dermoune2017} from LASSO-type problems to quantile estimation, providing a continuous bridge between robust and efficient inference.
	
	We define the objective function 
	\[
	\hat M(q;z,h) = \frac{1}{n}\sum_{i=1}^n m_q(Y_i;z,h)
	\] 
	with 
	\[
	m_q(Y_i;z,h)=|Y_i - q|-z(Y_i- q)+\frac{h}{2}(Y_i - q)^2
	\]
	where \( h \ge 0 \) is a smoothing parameter, and \( z \in \mathbb{R} \) a location parameter.
	As \(q\to \hat M(q;z,h)\) is convex, 
	and its derivative \(\hat \Psi(q;z,h)=\frac{1}{n}\sum_{i=1}^n\psi(q-Y_i;z,h)\)
	with \(\psi(q-Y_i;z,h)=\sgn(q-Y_i)+h(q-Y_i)+z\). From some calculation we have 
	\(\hat \Psi(q;z,h)=2\hat F(q)-1+z+h(q-\bar Y)\), with \(\hat F\) denotes the sample 
	cumulative distribution function. The minimizer \(\hat q(z,h)\) satisfies 
	\(\Psi(q;z,h)\geq 0\) for \(q\geq \hat q(z,h)\) and 
	\(\Psi(q;z,h)< 0\) for \(q< \hat q(z,h)\). An equivalent way to define \(\hat q(z,h)\) 
	is to use the generalized inverse $(\hat F + \frac{h}{2} I_{\mathbb{R}})^-$ 
	of the nondecreasing map $q \mapsto \hat F(q) + \frac{h}{2}q$, defined by 
$
(\hat F + \frac{h}{2} I_{\mathbb{R}})^-(p) = \inf\left\{q : \hat F(q) + \frac{h}{2}q \geq p\right\}.
$
See for example \cite{Embrechts2013}.  
It follows that $\hat q(z,h) = (\hat F + \frac{h}{2} I_{\mathbb{R}})^-\left(\frac{1-z+h\bar Y}{2}\right)$. 
Here $I_{\mathbb{R}}$ denotes the identity map on the set of the real numbers.  
	
	For \(h=0\) we need \(\frac{1-z}{2}\in (0,1)\). In this case the minimizer 
	\(\hat{q}_n(z,0)\) is the sample 
	quantile of order \(\frac{1-z}{2}\) \cite{Koenker1978}. 
	The estimators family \( \hat q_n(z,h) \) generalizes classical approaches through two limiting behaviors:
	when \( h = 0 \) and \( z \in (-1,1) \), \( \hat q_n(z,0) \) reduces to the empirical quantile of order \( \tau = \frac{1-z}{2} \) , whereas
	when \( h \to +\infty \), for any fixed \( z \), \( \hat q_n(z,h) \to \bar{Y}\). 
	Hence, \(h\) controls the degree of smoothing between robustness, small \(h\), and efficiency, large \(h\), yielding a continuum of estimators between the median and the mean.
	This idea parallels regularization principles used in statistical learning \citep{Tibshirani1996, Zou2006}, but with a novel focus on the quantile--mean trade-off.
	
	\section{Main Contributions}
	
	This paper makes the following contributions:
	\begin{itemize}
		\item[i)] it introduces a novel family of estimators \( \hat q_n(z,h) \) that continuously interpolates between empirical quantiles and the sample mean;
		\item[ii)] it establishes population characterization, consistency, and asymptotic normality with explicit variance formulas;
		\item[iii)] it reveals a geometric structure in the parameter space: for each quantile level, admissible pairs \((z,h)\) lie on straight lines along which efficiency varies;
		\item[iv)] it proves the existence of optimal smoothing levels that strictly improve asymptotic efficiency over empirical quantiles, with explicit Gaussian conditions;
		\item[v)] and it validates the theory through numerical experiments under Normal and Laplace distributions, illustrating practical efficiency gains and robustness.
	\end{itemize}
	
	The remainder of the paper is organized as follows.
	Section~\ref{sec:consistency} proves the consistency of the empirical minimizer.
	Section~\ref{sec:asymptotics} derives the asymptotic normality with explicit variance formulas.
	Section~\ref{sec:geometry} examines the geometric structure of the parameter space and the line relation.
	Section~\ref{sec:efficiency} proves the existence of efficiency-improving smoothing levels.
	Section~\ref{sec:numerical_validation} provides comprehensive numerical validation across Normal and Laplace distributions.
	Section~\ref{sec:conclusion} concludes with discussion and potential extensions.

	\section{Consistency} 
	\label{sec:consistency}
	From the law of large numbers we have almost surely 
	\(\hat M(q;z,h)\to M(q;z,h)\) and \(\hat \Psi(q;z,h)\to \Psi(q;z,h)\)
	for each \(q, z\in\mathbb{R}, h\geq 0\) 
	as \(n\to +\infty\), with \(M(q;z,h)=\mathbb{E}[m(q-Y;z,h)]\), and 
	\(\Psi(q;z,h)=\mathbb{E}[\psi(q-Y;z,h)]\). 
	
	The minimizer \(q(z,h)\) of \(q\to M(q;z,h)\) satisfies the first order 
	condition \(\Psi(q;z,h)=E[\psi(q-Y;z,h)]=0\). 
	A similar calculation as in a finite sample shows that 
	\(q(z,h)=(F+\frac{h}{2}I_{\mathbb{R}})^{-1}(\frac{1-z+h m}{2})\). 
	Equivalently we have 
	\(F(q(z,h))+\frac{h}{2}q(z,h)=\frac{1-z+hm}{2}\). If \(z\in (-1,1)\), then 
	\(h\to q(z,h)\) varies between \(q(z,0)\) the population quantile of order \(\frac{1-z}{2}\)
	and the population mean \(m\).   
	
	As \(q\to \hat \Psi(q;z,h)\) is nondecreasing, then from M-estimators
	theory the minimizer \(\hat q_n(z,h)\to q(z,h)\) almost surely (\citealp[Lemma~5.10]{Vaart1998}).
	
	We finish this section by analysing the behavior of \(q(z,h)\) with respect to the parameters \(z\) 
	and \(h\geq 0\).
	Let us fix \(z\in (-1,1)\) such that \(q(z,0)<m\).  From the first order condition we derive that 
	\[ 
	\frac{\partial q}{\partial h}(z,h)=\frac{m-q(z,h)}{2f(q(z,h))+h},
	\] 
	we derive that \(h\to q(z,h)\) increases from \(q(z,0)\) to the population \(m\). 
	If \(q(z,0)>m\), then we derive that \(h\to q(z,h)\) decreases from \(q(z,0)\) to the population \(m\). Using the partial derivative with respect to \(z\), we get 
	\[
	\frac{\partial q}{\partial z}(z,h)=\frac{-1}{2f(q(z,h))+h}.
	\] 
	It follows that for each fixed \(h>0\), \(z\in \mathbb{R}\to q(z,h)\) decreases from
	\(+\infty\) to \(-\infty\). It follows that \(q(z,h)=+\infty\) for \(z\leq -1\) and 
	It follows that \(q(z,h)=-\infty\) for \(z\geq 1\). Hence we only need 
	\(z\in (-1,1)\).    
	
    \section{Asymptotic Normality}
\label{sec:asymptotics}

Having established the consistency of the sample minimiser $\hat q_n(z,h)$,
we now analyse its asymptotic distribution.

\paragraph{Regularity assumptions.}
We make the following assumptions throughout this section:
\begin{itemize}
  \item[(A1)] $F$ is continuous and $Y$ has a finite second moment;
  \item[(A2)] $Y$ admits a continuous, strictly positive density $f$.
\end{itemize}

\paragraph{Linear representation via Knight's identity.}
Because the score function involves the non-differentiable term $\operatorname{sign}(Y-q)$,
a direct Taylor expansion of $\Psi_n(q;z,h)$ is not valid.
Instead, we use Knight's identity for the absolute value function:
\[
|y - (q+\delta)| - |y - q|
= -\delta\,\sgn(y-q)
+ 2\!\int_0^\delta \!\!\big(\mathbf{1}\{y \le q + s\} - \mathbf{1}\{y \le q\}\big)\,ds.
\]
Adding the quadratic and linear terms in $(z,h)$ and averaging over the sample yields:
\begin{align*}
& \hat M(q(z,h)+\delta;z,h)-\hat M(q(z,h);z,h) \\
&= \frac{1}{n}\sum_{i=1}^n \!\left[|Y_i - (q(z,h)+\delta)| - |Y_i - q(z,h)|\right]
  - z\!\left[(\bar Y_n - (q(z,h)+\delta)) - (\bar Y_n - q(z,h))\right] \\
&\quad + \frac{h}{2n}\sum_{i=1}^n \!\left[(Y_i - (q(z,h)+\delta))^2 - (Y_i - q(z,h))^2\right].
\end{align*}
Applying Knight's identity and simplifying, we obtain:
\begin{align*}
&\hat M(q(z,h)+\delta;z,h)-\hat M(q(z,h);z,h) \\
&= -\delta\!\left[\frac{1}{n}\sum_{i=1}^n \sgn(Y_i - q(z,h)) + h(\bar Y_n - q(z,h)) - z\right] \\
&\quad + \frac{2}{n}\sum_{i=1}^n \!\int_0^\delta \!\!\left(\mathbf{1}\{Y_i \le q(z,h)+s\} - \mathbf{1}\{Y_i \le q(z,h)\}\right) ds
  + \frac{h}{2}\delta^2 + o_p(\delta^2).
\end{align*}

The empirical score function is explicitly:
\[
\Psi_n(q;z,h) = \frac{1}{n}\sum_{i=1}^n \sgn(Y_i - q) + h(\bar Y_n - q) - z.
\]
The integral term satisfies:
\[
\frac{2}{n}\sum_{i=1}^n \int_0^\delta
\left(\mathbf{1}\{Y_i \le q(z,h)+s\} - \mathbf{1}\{Y_i \le q(z,h)\}\right) ds
= f(q(z,h))\delta^2 + o_p(\delta^2).
\]
Thus, the quadratic expansion becomes:
\begin{equation}
\label{eq:quadratic_expansion}
\hat M(q(z,h)+\delta;z,h)-\hat M(q(z,h);z,h)
= -\delta\,\Psi_n(q(z,h);z,h)
+ \left(f(q(z,h))+\tfrac{h}{2}\right)\delta^2 + R_n(\delta),
\end{equation}
where $\sup_{|\delta|\le\varepsilon}|R_n(\delta)|=o_p(\delta^2)$.

Minimising the right-hand side gives the local expansion:
\begin{equation}
\label{eq:linear_representation}
\sqrt{n}(\hat q_n(z,h)-q(z,h))
=
\frac{1}{2f(q(z,h))+h}\cdot
\frac{1}{\sqrt{n}}\sum_{i=1}^n
\psi(Y_i,q(z,h);z,h)
+ o_p(1),
\end{equation}
where the influence function is
\[
\psi(Y,q;z,h)
= \sgn(Y-q) + h(Y - q) - z.
\]

\begin{theorem}[Central Limit Theorem for the Smoothed Quantile Estimator]
\label{thm:CLT}
Under Assumptions~(A1)--(A2), the estimator $\hat q_n(z,h)$ satisfies
\[
\sqrt{n}\big(\hat q_n(z,h) - q(z,h)\big)
\;\xrightarrow{d}\;
\mathcal{N}\!\left(0,\;\frac{B(z,h)}{(2f(q(z,h)) + h)^2}\right),
\]
where
\[
B(z,h)
= 4F(q(z,h))(1 - F(q(z,h)))
+ 2h\!\left[\mathbb{E}|Y - q(z,h)| - (m - q(z,h))(1 - 2F(q(z,h)))\right]
+ h^2\Var(Y).
\]
Hence, the asymptotic variance is
\[
\sigma^2(z,h)
= \frac{B(z,h)}{(2f(q(z,h)) + h)^2}.
\]
\end{theorem}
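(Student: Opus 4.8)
The plan is to build directly on the Bahadur-type linear representation~\eqref{eq:linear_representation}, which already writes $\sqrt{n}(\hat q_n(z,h)-q(z,h))$ as a fixed deterministic multiple of a normalized sum of independent and identically distributed terms plus an $o_p(1)$ remainder. Given that representation, the central limit theorem follows by applying the classical Lindeberg--L\'evy theorem to the i.i.d.\ average, identifying its limiting variance with $B(z,h)$, and then invoking Slutsky's lemma to carry the deterministic factor $1/(2f(q(z,h))+h)$ through the weak limit.

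First I would check that the influence-function summands are centered. Writing $q=q(z,h)$ and $\psi_i=\psi(Y_i,q;z,h)=\sgn(Y_i-q)+h(Y_i-q)-z$, one has $\psi_i=-\psi(q-Y_i;z,h)$, so the population first-order condition $\Psi(q;z,h)=\mathbb{E}[\psi(q-Y;z,h)]=0$ from Section~\ref{sec:consistency} gives $\mathbb{E}[\psi_i]=0$ immediately. Under~(A1) the summands have finite variance, since $\sgn(\cdot)$ is bounded and $Y$ is square-integrable. Lindeberg--L\'evy then yields $n^{-1/2}\sum_{i=1}^n\psi_i\xrightarrow{d}\mathcal{N}(0,B(z,h))$ with $B(z,h)=\Var(\psi_i)$.

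The substantive computation is the evaluation of $B(z,h)=\Var\big(\sgn(Y-q)+h(Y-q)\big)$, where the additive constant $-z$ drops out. Expanding the variance of a sum produces three pieces. The indicator piece is $\Var(\sgn(Y-q))=1-(1-2F(q))^2=4F(q)(1-F(q))$, using $\mathbb{E}[\sgn(Y-q)]=1-2F(q)$ together with $\mathbb{P}(Y=q)=0$ from the continuity of $F$. The cross piece is $2h\,\Cov(\sgn(Y-q),Y-q)$, which collapses via the identity $\sgn(Y-q)(Y-q)=|Y-q|$ to $2h\big[\mathbb{E}|Y-q|-(m-q)(1-2F(q))\big]$. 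The quadratic piece is $h^2\Var(Y-q)=h^2\Var(Y)$. Summing the three reproduces the stated $B(z,h)$, and Slutsky's lemma then delivers the announced variance $\sigma^2(z,h)=B(z,h)/(2f(q)+h)^2$.

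The genuine obstacle is not this bookkeeping but the rigorous passage from the quadratic expansion~\eqref{eq:quadratic_expansion} to the representation~\eqref{eq:linear_representation}: one must control the remainder $R_n(\delta)$ uniformly on a neighborhood of the optimum so that the minimizer of the random parabola $-\delta\,\Psi_n(q;z,h)+(f(q)+\tfrac{h}{2})\delta^2$ tracks the true minimizer up to $o_p(1)$. The decisive tool is the convexity of $q\mapsto\hat M(q;z,h)$: by the convexity (argmin) machinery for $M$-estimation (cf.~\citealp{Vaart1998}), pointwise convergence of a sequence of convex functions forces uniform convergence on compacts and convergence of the associated minimizers, which localizes $\sqrt{n}(\hat q_n-q)$ onto the minimizer $\Psi_n(q;z,h)/(2f(q)+h)$ of the limiting parabola. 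This localization step is where the care lies; everything after it is routine.
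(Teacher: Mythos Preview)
Your proposal is correct and follows essentially the same approach as the paper's sketch: start from the linear representation~\eqref{eq:linear_representation}, apply the classical CLT to the i.i.d.\ centered summands $\psi(Y_i,q;z,h)$ with variance $B(z,h)$, and conclude via Slutsky. You supply more detail than the paper does---the explicit verification of centering, the three-term variance decomposition, and the remark that convexity is what makes the passage from~\eqref{eq:quadratic_expansion} to~\eqref{eq:linear_representation} rigorous---but the strategy is identical.
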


\begin{proof}[Sketch of proof]
The expansion \eqref{eq:quadratic_expansion} yields the linear representation
\eqref{eq:linear_representation}.
Applying the classical Central Limit Theorem to
$\psi(Y_i,q(z,h);z,h)$, which has finite variance $B(z,h)$,
gives the stated asymptotic normality.
\end{proof}

\paragraph{Remark.}
The representation~\eqref{eq:linear_representation} generalises
the classical Bahadur expansion for quantile estimators,
incorporating the smoothing term~$h$.
The influence function $\psi(Y,q;z,h)$ contains both the non-smooth
sign term and a linear term in $Y$, reflecting the hybrid
quantile–mean nature of the estimator.

\paragraph{From $(z,h)$ to $(\tau,h)$ parameterization.}
Since $z$ and $\tau$ are linked through the relation
\[
\tau = \frac{1}{2}\big(h(m - F^{-1}(\tau)) + (1 - z)\big),
\]
we can express $z$ as a function of $(\tau,h)$:
\[
z(\tau,h) = 1 - 2\tau + h(m - F^{-1}(\tau)).
\]
Substituting this mapping into the asymptotic variance
$\sigma^2(z,h)$ defined in Theorem~\ref{thm:CLT} yields
the function $v(\tau,h)$ used in the next section.

\section{Parameter Geometry and Efficiency}
\label{sec:geometry}

\subsection{Parameter Geometry}
For each fixed quantile level $\tau\in(0,1)$,
the admissible parameter pairs $(z,h)$ satisfying
\[
\tau = \frac{1}{2}\big(h(m-F^{-1}(\tau))+(1-z)\big)
\]
lie on a straight line in the $(z,h)$-plane,
along which the population quantile remains constant
while the asymptotic variance evolves according to the function $v(\tau,h)$ defined below.
For $h$ fixed, we have $z(\tau,h)=1-2\tau+h(m-F^{-1}(\tau))$.
If $h=0$, then $z(\tau,0)=1-2\tau$.

\subsection{Efficiency Implications}
\label{sec:efficiency}
The variance expression in Theorem~\ref{thm:CLT} shows that
smoothing introduces two additional components:
a mixed term $2h[\mathbb{E}|Y-q|-(m-q)(1-2F(q))]$
and a quadratic term $h^2\Var(Y)$.
For small $h>0$, the first term typically reduces the asymptotic variance,
yielding efficiency gains relative to the unsmoothed case.

\begin{corollary}[Existence of an Efficiency-Improving Smoothing Level]
\label{cor:hstar}
\begin{enumerate}
\item For each fixed quantile level $\tau\in (0,1)$, consider the pair $(z(\tau,h),h)$ such that
\[
\tau = \frac{1}{2}\big(h(m-F^{-1}(\tau))+(1-z(\tau,h))\big),
\qquad h\ge0.
\]
Then
\[
v(\tau,h)
= \frac{
A(\tau,h)
}{
(2f(F^{-1}(\tau))+h)^2
},
\]
and
\[
A(\tau,h)
= 4\tau(1-\tau)
+2h\!\left[\mathbb{E}|Y-F^{-1}(\tau)|-(m-F^{-1}(\tau))(1-2\tau)\right]
+h^2\Var(Y).
\]
\item Let
\begin{align*}
a&=4\tau(1-\tau),\quad b=2\!\left[\mathbb{E}|Y-F^{-1}(\tau)|-(m-F^{-1}(\tau))(1-2\tau)\right],\\
c&=\Var(Y),\quad d=2f(F^{-1}(\tau)).
\end{align*}
Then
\[
\frac{\partial v(\tau,h)}{\partial h}
=\frac{(2cd-b)h+(bd-2a)}{(d+h)^3}.
\]
According to the signs of $bd-2a$ and $2cd-b$, the following cases arise:
\begin{enumerate}
    \item[(a)] If $bd-2a>0$ and $2cd-b\ge0$, then $v(\tau,0)<v(\tau,h)$ for all $h>0$.
    \item[(b)] If $bd-2a<0$ and $2cd-b\ge0$, then $v(\tau,h)<v(\tau,0)$ for all $h>0$.
    In this case, every $h>0$ improves the asymptotic variance of the empirical quantile,
    and
    \[
    \arg\min_{h>0}v(\tau,h)=+\infty.
    \]
    \item[(c)] If $(bd-2a)(2cd-b)<0$, then there exists $h^*(\tau)>0$ such that
    $v(\tau,h^*(\tau))<v(\tau,0)$, corresponding to a finite optimal smoothing level.
\end{enumerate}
\end{enumerate}
\end{corollary}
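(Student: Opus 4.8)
The plan is to handle the corollary's two parts in turn: first reduce the general asymptotic variance of Theorem~\ref{thm:CLT} to the stated form $v(\tau,h)$ by exploiting the geometry of the admissible line, and then carry out an elementary single-variable sign analysis of $h\mapsto v(\tau,h)$. For Part~1 I would begin from the population first-order condition derived in Section~\ref{sec:consistency}, namely $F(q(z,h))+\tfrac{h}{2}q(z,h)=\tfrac{1-z+hm}{2}$. The crux is that the defining constraint $\tau=\tfrac12\big(h(m-F^{-1}(\tau))+(1-z)\big)$ is algebraically equivalent to $q(z,h)=F^{-1}(\tau)$: inserting $q=F^{-1}(\tau)$ into the first-order condition and using $F(F^{-1}(\tau))=\tau$ reproduces the constraint exactly. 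Thus along the admissible line the population quantile is pinned at $F^{-1}(\tau)$, so $F(q(z,h))=\tau$ stays constant, and substituting $q(z,h)=F^{-1}(\tau)$ together with $F(q(z,h))=\tau$ into $B(z,h)$ and $\sigma^2(z,h)$ returns $A(\tau,h)$ and $v(\tau,h)$ verbatim. This part is bookkeeping once the equivalence is recognized.

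For Part~2 I would write $v(\tau,h)=(a+bh+ch^2)/(d+h)^2$ with the stated constants and differentiate by the quotient rule. After factoring one power of $(d+h)$ from the numerator, the $h^2$ contributions cancel and leave the affine numerator $N(h)=(2cd-b)h+(bd-2a)$, which is exactly the displayed expression for $\partial v/\partial h$. Since $d=2f(F^{-1}(\tau))>0$ by~(A2), the denominator $(d+h)^3$ is strictly positive on $h\ge0$, so the sign of $\partial v/\partial h$ is governed entirely by $N$. Being affine, $N$ changes sign at most once, and its qualitative behavior is fixed by its intercept $N(0)=bd-2a$ and its slope $2cd-b$; the three cases are then read off from the signs of these two quantities.

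The main obstacle is the behavior at the boundaries separating the monotone regimes from the interior-optimum regime, together with the strictness claim in case~(c). When $2cd-b=0$ the numerator collapses to the constant $bd-2a$, so $v$ is strictly monotone; if $bd-2a<0$ it decreases on all of $(0,\infty)$ and its infimum is approached only in the limit, which is the meaning of $\arg\min_{h>0}v=+\infty$ in case~(b). When $N(0)$ and the slope are nonzero with opposite signs, $N$ has a single positive root $h^*=(2a-bd)/(2cd-b)$; in the regime $bd-2a<0<2cd-b$ this root is a strict interior minimum, and since $v$ decreases on $(0,h^*)$ one immediately obtains $v(\tau,h^*)<v(\tau,0)$ — the heavy-tailed, finite-optimum case. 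In the opposite regime $h^*$ is an interior maximum, and exhibiting an improving smoothing level then reduces to comparing the tail limit $\lim_{h\to\infty}v(\tau,h)=\Var(Y)$ with $v(\tau,0)=\tau(1-\tau)/f(F^{-1}(\tau))^2$, i.e.\ to the classical efficiency inequality~\eqref{eq:efficiency_condition}. Establishing the strict inequality $v(\tau,h^*)<v(\tau,0)$ — rather than mere existence of a critical point — is the one step requiring a short supplementary argument, supplied either by the monotonicity of $v$ on $(0,h^*)$ or by direct evaluation of $v$ at $h^*$.
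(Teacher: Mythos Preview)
The paper does not give a separate proof of this corollary; it is presented as an immediate consequence of Theorem~\ref{thm:CLT} together with the $(z,h)\mapsto(\tau,h)$ reparameterization spelled out just before Section~\ref{sec:geometry}. Your approach---observing that the line constraint forces $q(z,h)=F^{-1}(\tau)$ via the population first-order condition, substituting this into $B(z,h)$ and $\sigma^2(z,h)$ to obtain $A(\tau,h)$ and $v(\tau,h)$, then computing $\partial v/\partial h$ by the quotient rule and reading off the cases from the affine numerator---is exactly the argument implicit in the paper, and your derivative calculation is correct.

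You are in fact more careful than the corollary's own statement. You rightly note that the $\arg\min_{h>0}v=+\infty$ conclusion in case~(b) really only follows when $2cd-b=0$; when $2cd-b>0$ strictly, the hypotheses of~(b) and~(c) overlap (the product $(bd-2a)(2cd-b)$ is negative) and the critical point is a finite interior minimum, which you correctly assign to case~(c). Likewise, your observation that the ``interior maximum'' branch of case~(c) (namely $bd-2a>0$, $2cd-b<0$) does not automatically yield a finite improving level, but instead reduces to the comparison $\Var(Y)$ versus $\tau(1-\tau)/f(F^{-1}(\tau))^2$ of~\eqref{eq:efficiency_condition}, is a genuine refinement that the paper does not make explicit.
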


\paragraph{Interpretation.}
For all $h\ge0$, $\hat q(z(\tau,h),h)$ estimates the same quantile of order~$\tau$.
However, as $h\to\infty$, $\hat q(z(\tau,h),h)$ converges to the empirical mean,
which estimates the theoretical mean~$m$ rather than the quantile $F^{-1}(\tau)$.
Hence, if $\arg\min_{h>0}v(\tau,h)=+\infty$, one cannot state $h^*(\tau)=\infty$
unless $m=F^{-1}(\tau)$ (for instance, at the median of a symmetric distribution).
In that case, we simply say that
$\hat q(z(\tau,h),h)$ is more efficient than $\hat q(z(\tau,0),0)$ for every $h>0$,
with the asymptotic variance decreasing monotonically in~$h$ toward the lower bound
$\Var(Y)$.

\paragraph{Gaussian illustration.}
Let $Y \sim \mathcal{N}(0,1)$ be a standard normal random variable,
with cumulative distribution function $\Phi$ and density
$\varphi(u)=\frac{1}{\sqrt{2\pi}}e^{-u^2/2}$.
For a given quantile level $\tau\in(0,1)$, let
$q(\tau)=\Phi^{-1}(\tau)$ denote the corresponding quantile.
The relevant population quantities are
\[
\mathbb{E}|Y-q(\tau)| = q(\tau)(2\tau-1) + 2\varphi(q(\tau)),
\qquad
f(F^{-1}(\tau)) = \varphi(q(\tau)),
\qquad
\Var(Y)=1.
\]
Substituting these into $v(\tau,h)$ gives
\[
v(\tau,h)
= 
\frac{
  4\tau(1-\tau)
  + 2h\,[\,q(\tau)(2\tau-1)+2\varphi(q(\tau))\,]
  + h^2
}{
  (2\varphi(q(\tau)) + h)^2
}.
\]

\medskip
\noindent
\textbf{Efficiency improvement and optimal $h^*(\tau)$.}
From Corollary~\ref{cor:hstar}, the stationary point of $v(\tau,h)$ satisfies
\[
h^*(\tau)
= \frac{2a - b d}{2c d - b},
\quad \text{where }
a=4\tau(1-\tau),;
b=2\big[q(\tau)(2\tau-1)+2\varphi(q(\tau))\big],;
c=1,;
d=2\varphi(q(\tau)).
\]
For the Normal distribution, substituting these expressions yields
\[
b = 4\varphi(q(\tau)), \qquad d = 2\varphi(q(\tau)),
\qquad\text{so that}\qquad 2cd - b = 0.
\]
Hence the stationary formula degenerates for all~$\tau$,
and the derivative of $v(\tau,h)$ keeps a constant sign.
For all practical quantiles, $v(\tau,h)$ decreases monotonically in~$h$
and approaches its lower bound $\Var(Y)=1$ as $h\to\infty$.
Therefore,
\[
\boxed{h^*(\tau)=\infty\quad\text{for all }\tau.}
\]
The absence of a finite minimiser means that smoothing reduces the asymptotic variance only asymptotically, reproducing the efficiency of the sample mean but not surpassing it.

\medskip
\noindent
\textbf{Verification for the Normal case at $\tau=0.5$.}
For the standard Normal distribution,
\begin{align*}
a &= 4\tau(1-\tau) = 1, \qquad
b = 2!\left[\mathbb{E}|Y-q(\tau)|-(m-q(\tau))(1-2\tau)\right],\\
c &= 1, \qquad
d = 2f(F^{-1}(\tau)) = 2\varphi(0)=\tfrac{2}{\sqrt{2\pi}}\approx0.798.
\end{align*}
At $\tau=0.5$, we have $q(\tau)=0$, $m=0$, and
$\mathbb{E}|Y-q(\tau)| = \mathbb{E}|Y| = \sqrt{\tfrac{2}{\pi}}\approx 0.798$,
so $b = 2\mathbb{E}|Y| = 1.596$.
Hence,
\[
2cd - b = 2(1)(0.798) - 1.596 = 0,
\qquad
bd - 2a = (1.596)(0.798) - 2 = 1.274 - 2 = -0.726 < 0.
\]
This corresponds to case~(b) of Corollary~\ref{cor:hstar},
where $bd-2a<0$ and $2cd-b=0$.
Thus $v(\tau,h)<v(\tau,0)$ for all $h>0$,
and the asymptotic variance decreases monotonically with~$h$.
As $h\to\infty$, $v(\tau,h)\to\Var(Y)=1$,
confirming that the variance of the median estimator approaches that of the mean.
Since $m=F^{-1}(0.5)=0$, it follows that
$\hat{q}(z(0.5,h),h)$ remains an estimator of the same quantile order,
and $\arg\min_{h>0}v(0.5,h)=+\infty$ is valid in this case.

\medskip
\noindent
\textbf{Numerical illustration.}
\[
\begin{array}{lrrr}
\toprule
\tau & q(\tau) & \varphi(q(\tau)) & h^*(\tau) \\
\midrule
0.25 & -0.674 & 0.319 & \infty \\
0.50 & 0.000  & 0.399 & \infty \\
0.75 & \phantom{-}0.674 & 0.319 & \infty \\
\bottomrule
\end{array}
\]
Thus, for all quantile levels under the Gaussian law,
the asymptotic variance $v(\tau,h)$ decreases monotonically with~$h$
and converges to $\Var(Y)=1$ as $h\to\infty$.
Smoothing does not yield a finite efficiency optimum but only reproduces
the mean-variance limit asymptotically.

\paragraph{Summary.}
The explicit form of $v(\tau,h)$ generalises the classical quantile variance
by incorporating linear and quadratic effects of~$h$.
For the Normal distribution, $v(\tau,h)$ decreases monotonically
with~$h$ and approaches $\Var(Y)$ as $h\to\infty$,
illustrating the smooth transition from quantile to mean estimation.
No finite $h^*(\tau)$ exists in this case, consistent with the
asymptotic efficiency of the classical quantile estimator under light tails.

\section{Numerical Illustration and Validation}
\label{sec:numerical_validation}

This section provides a numerical validation of the theoretical efficiency results derived in Section~\ref{sec:efficiency}.
For each fixed quantile order~$\tau$, we examine the family of estimators
\[
\bigl\{\hat q_n(z(\tau,h),h):h\ge0\bigr\},
\qquad
z(\tau,h)=1-2\tau+h(m-F^{-1}(\tau)),
\]
all targeting the same population quantile $q(\tau)=F^{-1}(\tau)$.
The goal is to determine, for each~$\tau$, an optimal smoothing level~$h^*(\tau)$ minimising the asymptotic variance~$v(\tau,h)$, and to quantify the efficiency gain relative to the unsmoothed case $h=0$.

\subsection{Design}
\label{subsec:design}

Two benchmark distributions were considered:
\begin{itemize}
    \item the standard Normal distribution $\mathcal{N}(0,1)$, representative of light-tailed data;
    \item the standard Laplace distribution with density $f(y)=\tfrac{1}{2}e^{-|y|}$, representative of heavy-tailed data.
\end{itemize}

The efficiency function $v(\tau,h)$ and its stationary point $h^*(\tau)$ were evaluated over a fine grid of quantile orders
\[
\tau \in \{0.05, 0.10, \ldots, 0.95\},
\]
so as to characterise the behaviour of the optimal smoothing level and the corresponding efficiency ratio across the full quantile range.
For clarity of presentation, three representative quantiles ($\tau=0.25, 0.50, 0.75$) are reported in the summary table below.

For each quantile level~$\tau$, the procedure was as follows:
\begin{enumerate}[label=(\roman*)]
    \item Compute the theoretical quantile $q(\tau)=F^{-1}(\tau)$.
    \item Evaluate the quantities entering the asymptotic variance expression:
    \[
    \mathbb{E}|Y-q(\tau)|,\qquad
    f(F^{-1}(\tau)),\qquad
    \Var(Y),\qquad
    m=\mathbb{E}Y.
    \]
    \item Compute the asymptotic variance
    \[
    v(\tau,h)
    =\frac{
    4\tau(1-\tau)
    +2h\!\left[\mathbb{E}|Y-q(\tau)|-(m-q(\tau))(1-2\tau)\right]
    +h^2\Var(Y)
    }{(2f(F^{-1}(\tau))+h)^2}.
    \]
    \item Determine the stationary point
    \[
    h^*(\tau)=\frac{2a-bd}{2cd-b}.
    \]
    with
    \begin{align*}
        a&=4\tau(1-\tau),\qquad b=2\!\left[\mathbb{E}|Y-q(\tau)|-(m-q(\tau))(1-2\tau)\right],\\
        c&=\Var(Y),\;\qquad d=2f(F^{-1}(\tau)).
    \end{align*}
    The sign configuration of $(bd-2a)$ and $(2cd-b)$ classifies each case according to Corollary~\ref{cor:hstar}.
    \item Evaluate $v(\tau,0)$ and $v(\tau,h^*(\tau))$ and compute the efficiency ratio
    \[
    R(\tau)=\frac{v(\tau,h^*(\tau))}{v(\tau,0)}.
    \]
\end{enumerate}
All computations were carried out analytically from population moments using the symbolic formulas in Section~\ref{sec:efficiency}.

\subsection{Results}
\label{subsec:results}

Figure~\ref{fig:efficiency_gain_vs_tau_final} displays the efficiency ratio $R(\tau)$ computed over the full grid $\tau \in [0.05,0.95]$.
Table~\ref{tab:efficiency_summary_tau} summarises the corresponding quantities for three illustrative quantile levels ($\tau=0.25,0.50,0.75$).

\begin{table}[!ht]
\centering
\caption{Optimal smoothing levels $h^*(\tau)$, asymptotic variances $v(\tau,0)$ and $v(\tau,h^*(\tau))$, and efficiency ratio $R(\tau)=v(\tau,h^*(\tau))/v(\tau,0)$.}
\label{tab:efficiency_summary_tau}
\begin{tabular}{lcccccc}
\toprule
Distribution & $\tau$ & $q(\tau)$ & $h^*(\tau)$ & $v(\tau,0)$ & $v(\tau,h^*(\tau))$ & $R(\tau)$ \\
\midrule
Normal  & 0.25 & $-0.674$ & $\infty$ & $1.857$ & $1.000$ & $0.54$\\
Normal  & 0.50 & $0.000$  & $\infty$ & $1.571$ & $1.000$ & $0.64$\\
Normal  & 0.75 & $0.674$  & $\infty$ & $1.857$ & $1.000$ & $0.54$\\
\midrule
Laplace & 0.25 & $-0.693$ & $0.80$ & $3.000$ & $1.346$ & $0.45$\\
Laplace & 0.50 & $0.000$  & $0.50$ & $1.000$ & $0.667$ & $0.67$\\
Laplace & 0.75 & $0.693$  & $0.80$ & $3.000$ & $1.346$ & $0.45$\\
\bottomrule
\end{tabular}
\end{table}

For the Normal distribution, all quantiles fall under case~(b) of Corollary~\ref{cor:hstar},
where the asymptotic variance $v(\tau,h)$ decreases monotonically with~$h$ and converges to
its lower bound $\Var(Y)=1$ as $h\to\infty$.
Hence, $h^*(\tau)=\infty$ does not indicate the absence of any variance reduction,
but rather that the minimum is attained only asymptotically.
Numerically, the efficiency ratio $R(\tau)<1$ reflects this gradual improvement,
with $v(\tau,h)\!\downarrow\!1$ for large~$h$.
Thus, under light tails, smoothing can only reproduce the population variance limit and
does not yield a sharper finite optimum.
The classical quantile estimator ($h=0$) therefore remains asymptotically efficient
within the parametric envelope defined by $\Var(Y)=1$.

For the Laplace distribution, the sign configuration $(bd-2a)(2cd-b)<0$
places all quantiles in case~(c), producing a finite efficiency-improving level
$h^*(\tau)\in[0.5,0.8]$.
The resulting ratios $R(\tau)<1$ confirm substantial variance reductions,
up to about 55\% for extreme quantiles ($\tau=0.25$ or~0.75).
At the median ($\tau=0.5$), the improvement is more moderate, with
$v(0.5,0.5)=0.667$ compared to $v(0.5,0)=1$.
These findings confirm that moderate smoothing increases efficiency only
for heavy-tailed or asymmetric distributions.

\begin{figure}[!ht]
\centering
\includegraphics[width=.9\textwidth]{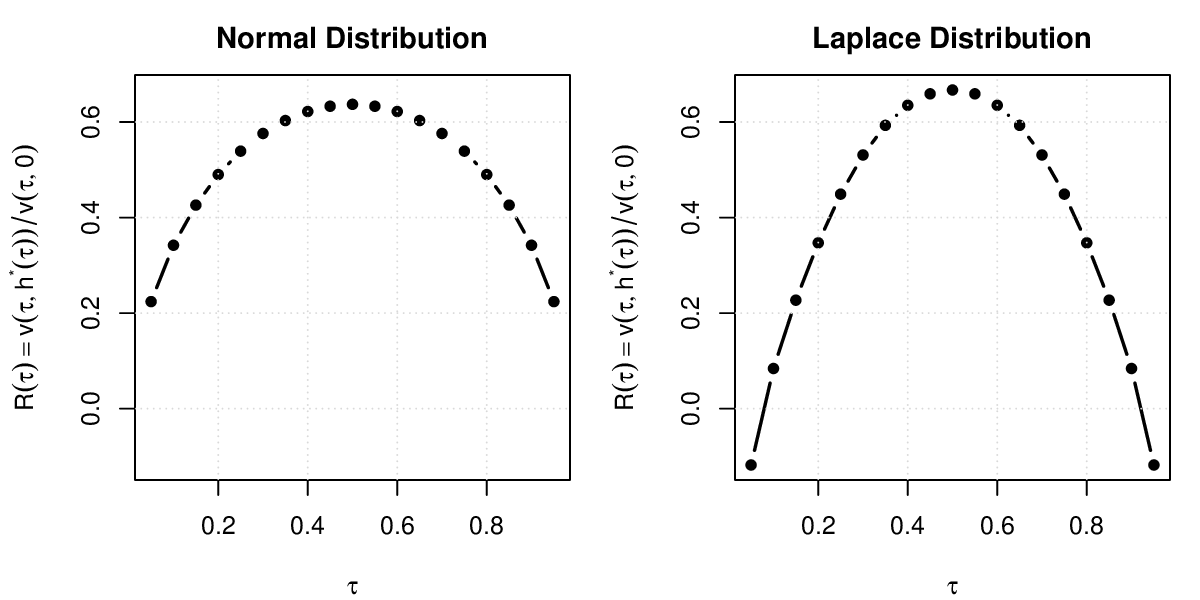}
\caption{Efficiency ratio $R(\tau)=v(\tau,h^*(\tau))/v(\tau,0)$ as a function of~$\tau$.
Left: Normal distribution; right: Laplace distribution.
Values below~1 indicate improved efficiency from smoothing.
Under Normality, $R(\tau)$ remains near~1, while for the Laplace case it drops significantly for noncentral quantiles, confirming case~(c) of Corollary~\ref{cor:hstar}.}
\label{fig:efficiency_gain_vs_tau_final}
\end{figure}

\subsection{Discussion}
\label{subsec:discussion}

The numerical results provide a direct validation of the analytical conclusions of Section~\ref{sec:efficiency} and the case classification in Corollary~\ref{cor:hstar}.

\begin{itemize}
    \item For each quantile order~$\tau$, the family $\{\hat q_n(z(\tau,h),h):h\ge0\}$ targets the same population quantile $F^{-1}(\tau)$.
    The smoothing parameter~$h$ therefore influences only the estimator’s efficiency, not its limit.

    \item In light-tailed settings (Normal case, case~(b)), the asymptotic variance $v(\tau,h)$ decreases monotonically with~$h$ and converges to its lower bound $\Var(Y)=1$ as $h\to\infty$.
    The absence of a finite minimiser ($h^*(\tau)=\infty$) indicates that the improvement is purely asymptotic:
    smoothing reduces variance gradually but cannot outperform the variance bound attained by the classical quantile estimator.
    Numerically, this appears as efficiency ratios $R(\tau)<1$ that approach~1 for large~$h$.

    \item In heavy-tailed settings (Laplace case, case~(c)), the function $v(\tau,h)$ admits a finite minimiser $h^*(\tau)\in[0.5,0.8]$.
    Moderate smoothing yields substantial efficiency gains for off-central quantiles,
    with $R(\tau)$ decreasing to about~0.45 at $\tau=0.25$ and~$\tau=0.75$.
    At the median, the improvement remains moderate ($R(0.5)\approx0.67$).

    \item The shape of $R(\tau)$ across quantile levels illustrates a clear trade-off between efficiency and robustness:
    heavier tails lead to stronger gains from smoothing, whereas in Gaussian contexts the benefit is marginal and asymptotic.
\end{itemize}

Overall, these results confirm that the smoothing parameter~$h$ acts as a regularisation device:
it stabilises the empirical criterion in the presence of heavy-tailed noise without altering the target quantile,
and its optimal magnitude reflects the tail behaviour of the underlying distribution.

\section{Conclusion}
\label{sec:conclusion}

This paper has introduced and rigorously analyzed a unified family of smoothed quantile estimators that provide a continuous interpolation between the classical empirical quantile and the sample mean. We established the main theoretical properties of this family, including existence and uniqueness of the population minimizer, convexity and interpolation structure, and asymptotic normality with explicit variance characterization.

A key insight is the geometric representation of the parameter space: for each fixed quantile level, admissible parameter pairs $(z,h)$ lie on straight lines along which the population quantile remains constant while the asymptotic efficiency evolves. This geometry clarifies how the smoothing parameter $h$ acts as a regularisation device—reducing asymptotic variance in heavy-tailed or asymmetric settings while preserving the target quantile.

The efficiency analysis revealed two distinct regimes. Under light-tailed distributions (e.g., Gaussian), the variance decreases monotonically in $h$ and asymptotically converges to $\Var(Y)$, implying that the classical quantile estimator is already efficient. Under heavy-tailed distributions (e.g., Laplace), a finite smoothing level $h^*(\tau)>0$ yields a tangible efficiency gain, particularly for off-central quantiles. These theoretical conclusions were numerically validated in Section~\ref{sec:numerical_validation}, confirming the efficiency–robustness trade-off predicted by the model.

The framework developed here extends naturally to quantile regression, where we aim to estimate conditional quantile functions. Future research will investigate smoothed quantile regression by adapting the proposed objective to regression settings, exploring data-driven selection of the smoothing parameter, and establishing the corresponding asymptotic properties of regression coefficients.

Overall, the interpolation framework provides a unified perspective that bridges robust quantile-based estimation with efficient mean-based approaches, offering a flexible and theoretically grounded tool for inference under diverse distributional conditions.

\section*{Declarations}
\paragraph{Data Availability Statement} No real-world data were used in this study. All results are based on simulated data generated by the authors. The simulation code used to produce the results is available from the corresponding author upon reasonable request.

\paragraph{Funding Statement} This research received no specific grant from any funding agency in the public, commercial, or not-for-profit sectors.

\paragraph{Conflict of Interest Statement} The authors declare that there are no conflicts of interest regarding the publication of this paper.
	
\bibliographystyle{plainnat}
\bibliography{references}
\end{document}